\newcolumntype{x}[1]{>{\centering\arraybackslash\hspace{0pt}}p{#1}}
\tikzset{
  hatch size/.store in=\hatchsize,
  hatch angle/.store in=\hatchangle,
  hatch line width/.store in=\hatchlinewidth,
  hatch size=5pt,
  hatch angle=0pt,
  hatch line width=.5pt,
}
\newcommand{\R}{\mathbb{R}}%
\newcolumntype{L}[1]{>{\raggedright\let\newline\\\arraybackslash\hspace{0pt}}m{#1}}
\newcolumntype{C}[1]{>{\centering\let\newline\\\arraybackslash\hspace{0pt}}m{#1}}
\newcolumntype{R}[1]{>{\raggedleft\let\newline\\\arraybackslash\hspace{0pt}}m{#1}}
\makeatletter \renewenvironment{proof}[1][\proofname]
{\par\pushQED{\qed}\normalfont\topsep6\p@\@plus6\p@\relax\trivlist\item[\hskip\labelsep\bfseries#1\@addpunct{.}]\ignorespaces}{\popQED\endtrivlist\@endpefalse} \makeatother
\theoremstyle{plain}
\newtheorem{thm}{Theorem}
\newtheorem{defn}{Definition}
\newtheorem{cor}{Corollary}
\newtheorem{prop}{Proposition}
\definecolor{uncblue}{RGB}{75, 165, 211}
\definecolor{nberblue}{RGB}{0, 90, 155}
\definecolor{resgreen}{RGB}{34, 84, 67}
\definecolor{pennblue}{RGB}{0, 44, 119}
\definecolor{pennred}{RGB}{152, 30, 50}
\newcounter{parentnumber}
\begin{document}

\author{Dihan Zou\thanks{Economics Department, University of North Carolina at Chapel Hill. Email: \href{mailto:dihan@email.unc.edu}{\texttt{dihan@email.unc.edu.}}}
}
\date{\today}
\title{Log-concave functions and transformations thereof}

\maketitle

    \vspace{-.3in}

    \begin{abstract}
        I summarize \cite{BagBerg05Logconcave}'s review on log-concave functions, make several corrections, and augment the discussion with further results that can be useful in obtaining monotone hazard rate. I also provide an application of monopoly pricing, where strict log-concavity of demand curve \textit{implies} strict concavity of the revenue function.
\end{abstract}
\maketitle

\tableofcontents

\section{Property of Log-Concave Functions}
Throughout, we are interested in a twice continuously differentiable function $f$ mapping from a connected set $D\subseteq \mathbb{R}$ to $\mathbb{R}_{++}$. 

\begin{defn}\label{def:lc}
  $f(x)$ is \textit{log-concave} on an interval $(a,b) \subset D$ if \[
 (\log (f(x)))'' \leq 0, \forall x \in (a, b).
 \]
 It is \textit{strictly} log-concave on $(a,b)$ if the above inequality is strict.
\end{defn}

\noindent {\bf Remark.} $f$ is said to be a \textit{log-concave function} if $f$ is log-concave on $D$. 

The following propositions describe some properties of log-concave functions. 
The first is immediate if one write the derivative condition in Definition \ref{def:lc} explicitly. 
\begin{prop}\label{prop:def}
    The followings are equivalent: 
    \begin{itemize}
        \item[1.] $f(x)$ is log-concave on $(a,b)$;
        \item[2.] $\frac{f'(x)}{f(x)}$ is decreasing on $(a,b)$;
        \item[3.] $f'' \cdot f - (f')^2 \leq 0$ on $(a,b)$.
    \end{itemize}
\end{prop}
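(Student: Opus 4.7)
The plan is to reduce the proposition to a routine calculus computation, exploiting the fact that $f$ is $C^2$ and strictly positive, so that $\log f$ is itself $C^2$ on $(a,b)$.

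First I would differentiate $\log f$ twice using the chain and quotient rules to obtain
\[
(\log f(x))' = \frac{f'(x)}{f(x)}, \qquad (\log f(x))'' = \frac{f''(x) f(x) - (f'(x))^2}{f(x)^2}.
\]
Since $f(x) > 0$ on $D$, the denominator $f(x)^2$ is strictly positive, so the sign of $(\log f)''$ coincides with the sign of $f'' f - (f')^2$. This immediately gives the equivalence of (1) and (3): $(\log f(x))'' \le 0$ on $(a,b)$ holds if and only if $f''(x) f(x) - (f'(x))^2 \le 0$ on $(a,b)$.

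For (1) $\Longleftrightarrow$ (2) I would invoke the standard fact that a $C^1$ function on an interval is (weakly) decreasing if and only if its derivative is nonpositive throughout. Applying this to $f'/f$, which is $C^1$ by the $C^2$ hypothesis on $f$ and positivity of $f$, gives that $f'/f$ is decreasing on $(a,b)$ iff $(f'/f)' = (\log f)'' \le 0$ on $(a,b)$, which is the definition of log-concavity.

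There is no real obstacle here; the content of the proposition is essentially bookkeeping. The only points worth being careful about are: (i) explicitly invoking $f > 0$ to clear the denominator when passing between (1) and (3) without flipping inequalities, and (ii) being explicit that ``decreasing'' in (2) means weakly decreasing, so that it lines up with the weak inequality in (1) rather than with strict log-concavity.
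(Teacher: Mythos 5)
Your proof is correct and matches the paper's intended argument: the paper simply remarks that the proposition "is immediate if one writes the derivative condition in Definition \ref{def:lc} explicitly," which is exactly the computation you carry out. Your added care about $f>0$ and about reading "decreasing" as weakly decreasing is appropriate and consistent with the weak inequalities in (1) and (3).
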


\begin{prop}\label{prop_concave}
  The followings are true:
  \begin{itemize}
      \item[1.] A nonnegative concave function is log-concave. 
      \item[2.] A log-concave function is quasi-concave.
  \end{itemize}
\end{prop}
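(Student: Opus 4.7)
For part 1, my plan is to invoke the third characterization in Proposition \ref{prop:def}: log-concavity of $f$ is equivalent to $f''\cdot f - (f')^2 \leq 0$. Concavity of $f$ delivers $f'' \leq 0$ pointwise, and since $f$ takes values in $\R_{++}$, the product $f''\cdot f$ is nonpositive; subtracting the nonnegative quantity $(f')^2$ preserves the inequality, so the characterization holds. A derivative-free sanity check would be to note that $\log$ is concave and nondecreasing on $\R_{++}$, and the composition of a concave nondecreasing outer function with a concave inner function is concave; I would probably mention this in passing since it is the more conceptual reason and does not require twice differentiability.

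For part 2, I would work directly from the definition of quasi-concavity as $f(\lambda x + (1-\lambda)y) \geq \min\{f(x), f(y)\}$ for $x, y \in D$ and $\lambda \in [0,1]$. Concavity of $\log f$ gives
\[
\log f(\lambda x + (1-\lambda)y) \;\geq\; \lambda \log f(x) + (1-\lambda)\log f(y) \;\geq\; \min\{\log f(x), \log f(y)\},
\]
and exponentiating both sides (using monotonicity of $\exp$) yields the required inequality. Equivalently, one could appeal to the fact that a concave function has convex superlevel sets and observe that, since $\log$ is a strictly increasing bijection from $\R_{++}$ onto $\R$, the superlevel sets of $f$ and $\log f$ coincide.

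Neither step presents a genuine obstacle: both arguments reduce to a one-liner once Proposition \ref{prop:def} is in hand. The only minor point worth flagging is that the ambient setup restricts $f$ to $\R_{++}$, so ``nonnegative'' in part 1 is really ``positive'' here and the expression $\log f$ is well defined without qualification.
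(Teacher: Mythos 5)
Your proposal is correct, but your primary route for part 1 differs from the paper's. The paper argues part 1 without derivatives: it starts from the defining inequality of concavity, $f(\alpha x + (1-\alpha)y) \geq \alpha f(x) + (1-\alpha)f(y)$, applies $\log$ to both sides, and then uses concavity of $\log$ to push the convex combination inside --- exactly the ``concave nondecreasing outer function composed with concave inner function'' argument you relegate to a sanity check. Your main argument instead verifies the characterization $f''\cdot f - (f')^2 \leq 0$ from Proposition \ref{prop:def}, using $f'' \leq 0$ and $f > 0$; this is perfectly valid given the paper's standing assumption that $f$ is twice continuously differentiable and positive, and it is arguably the quickest check in that setting, but as you yourself note it buys less generality than the composition argument, which survives without smoothness. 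For part 2 you and the paper prove the identical chain $\log f(\lambda x + (1-\lambda)y) \geq \lambda\log f(x) + (1-\lambda)\log f(y) \geq \min\{\log f(x), \log f(y)\}$; the only divergence is cosmetic --- you exponentiate directly to land on the min-inequality form of quasi-concavity, whereas the paper passes through the equivalence of the superlevel sets of $f$ and of $\log f$ and concludes via convexity of those sets. Your closing remark that ``nonnegative'' is effectively ``positive'' under the paper's setup (so $\log f$ is well defined) is a fair observation that the paper glosses over.
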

\begin{proof}
    See Appendix \ref{apx:concave}.
\end{proof}

\begin{prop}\label{prop_unimodal}
    If $f$ is a log-concave function, then $f$ is unimodal.
\end{prop}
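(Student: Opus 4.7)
The plan is to deduce unimodality directly from Proposition \ref{prop:def}: log-concavity of $f$ on $D$ is equivalent to $g(x) := f'(x)/f(x)$ being non-increasing on $D$. Since $f$ takes values in $\mathbb{R}_{++}$, the sign of $f'(x)$ agrees with the sign of $g(x)$ pointwise, so the monotonicity of $g$ forces $f'$ to change sign at most once, from nonnegative to nonpositive. This is exactly the content of unimodality.

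Concretely, I would set $x^\ast := \inf\{x \in D : g(x) \leq 0\}$, with the convention that $x^\ast = \sup D$ when the set is empty. For any $x \in D$ with $x < x^\ast$, the definition of the infimum gives $g(x) > 0$, hence $f'(x) > 0$, so $f$ is strictly increasing on $D \cap (-\infty, x^\ast)$. For $x \in D$ with $x > x^\ast$, the monotonicity of $g$ yields $g(x) \leq 0$ and hence $f'(x) \leq 0$, so $f$ is non-increasing on $D \cap (x^\ast, \infty)$. Therefore $f$ has a single peak at $x^\ast$ (possibly extended to a plateau on which $g \equiv 0$), which is the definition of unimodality.

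The step is almost entirely bookkeeping; there is no genuine analytical obstacle beyond handling degenerate cases. Specifically, one should note that $x^\ast$ may sit at an endpoint of $D$ — in which case $f$ is globally monotone, still a (degenerate) unimodal function — and that $g$ may vanish on an entire subinterval, producing a flat top rather than a strict peak. An alternative route would be to invoke Proposition \ref{prop_concave}, under which $f$ is quasi-concave, and then observe that any quasi-concave function on an interval of $\mathbb{R}$ is unimodal because its upper contour sets are intervals, precluding any strict interior dip. The direct argument via $g$ feels cleaner and more self-contained, so I would present that one.
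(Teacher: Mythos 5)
Your proof is correct and takes essentially the same route as the paper: both arguments use the equivalence from Proposition \ref{prop:def} that $f'/f$ is decreasing, together with $f>0$, to conclude that $f'$ changes sign at most once. Your write-up is in fact slightly more complete than the paper's, since it explicitly establishes that $f$ is increasing to the left of $x^\ast$ and non-increasing to the right, rather than only arguing that the set of critical points is connected.
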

\begin{proof}
    We are to show that if $f$ is log-concave, there is at most a single peak or a plateau. In other words, the solution to problem $\max_{x\in D} f(x)$ constitutes at most one connected interval. The first-order necessary condition is $f'(x) = 0$ for interior local maximum. Since $f'(x)/f(x)$ is decreasing, the equation $f'(x)/f(x) = 0$ has solutions in at most one connected set. Since $f(x) > 0$, this is equivalent to $f'(x) = 0$ having solutions in at most one connected set.
\end{proof}

\begin{defn}[Monotone likelihood ratio property]\label{defn:mlrp}
A family of probability density functions $\{f(x|\theta)\}$ has the \emph{monotone likelihood ratio property (MLRP)} if, for any $\theta_1 < \theta_2$, the likelihood ratio $\frac{f(x|\theta_2)}{f(x|\theta_1)}$ is non-decreasing in $x$.
    
\end{defn}

\begin{prop}\label{prop_mlrp}
A family of probability density function $\{f(x-\theta), \theta \in \R\}$ satisfies MLRP if and only if $f$ is log-concave.
\end{prop}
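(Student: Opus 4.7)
The plan is to reduce MLRP of the translation family to a statement about the monotonicity of the derivative of $\log f$, and then invoke Proposition \ref{prop:def}. Since $f > 0$, the ratio $\frac{f(x-\theta_2)}{f(x-\theta_1)}$ is non-decreasing in $x$ if and only if its logarithm is, so I would begin by setting $g := \log f$ and rewriting MLRP as the condition that
\[
h_{\theta_1,\theta_2}(x) := g(x-\theta_2) - g(x-\theta_1)
\]
is non-decreasing in $x$ for every $\theta_1 < \theta_2$.

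Next, since $f$ is twice continuously differentiable, so is $g$, and the monotonicity of $h_{\theta_1,\theta_2}$ is equivalent to $h_{\theta_1,\theta_2}'(x) = g'(x-\theta_2) - g'(x-\theta_1) \geq 0$ for all $x$. Setting $u = x-\theta_2$ and $v = x-\theta_1$, we have $u < v$ (because $\theta_1 < \theta_2$), and as $x,\theta_1,\theta_2$ range over $\R$ with $\theta_1 < \theta_2$, the pair $(u,v)$ ranges over all pairs with $u < v$ (within the relevant domain). Therefore the condition becomes $g'(u) \geq g'(v)$ whenever $u < v$, i.e., $g'$ is non-increasing, which is equivalent to $g'' \leq 0$, which by Proposition \ref{prop:def} is precisely log-concavity of $f$.

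This gives both directions simultaneously, since every step in the chain is an equivalence. I would write out the forward direction (MLRP $\Rightarrow$ log-concavity) by differentiating the ratio and choosing $\theta_2 - \theta_1$ small to extract the sign of $g''$, and the reverse direction (log-concavity $\Rightarrow$ MLRP) by integrating the inequality $g'(u) \geq g'(v)$ for $u < v$ to recover non-decreasingness of $h_{\theta_1,\theta_2}$.

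I do not anticipate a real obstacle: the main subtlety is just making sure that the quantifier "for all $\theta_1 < \theta_2$ and all $x$" in the definition of MLRP exactly matches "for all pairs $u < v$" in the monotonicity of $g'$, which is immediate from the translation structure of the family. The twice-differentiability assumption standing from the start of the section lets us freely pass between $g'' \leq 0$ and $g'$ non-increasing without any regularity gymnastics.
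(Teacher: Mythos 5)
Your proof is correct, but it takes a genuinely different route from the paper. You work at the level of derivatives: writing $g=\log f$, you observe that MLRP is equivalent to $g'(u)\ge g'(v)$ for all $u<v$, i.e.\ to $g'=f'/f$ being non-increasing, which is exactly item~2 of Proposition~\ref{prop:def}. The paper instead argues synthetically: for the direction log-concavity $\Rightarrow$ MLRP it expresses $x-\theta_1$ and $x'-\theta_2$ as convex combinations of $x-\theta_2$ and $x'-\theta_1$ with weight $t=\frac{x'-x}{\theta_2-\theta_1+x'-x}$ and adds the two concavity inequalities; for the converse it specializes to $t=\tfrac12$ to extract midpoint concavity of $\log f$. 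Your argument is shorter and ties the statement directly to the monotone-ratio characterization of log-concavity already established in the paper, but it leans on the standing $C^2$ assumption at every step (passing between monotonicity of $h_{\theta_1,\theta_2}$ and the sign of its derivative, and between $g'$ non-increasing and $g''\le 0$). The paper's convex-combination argument uses only concavity of $\log f$ as a pointwise inequality and so would survive verbatim if the smoothness assumption were dropped (modulo the standard fact that midpoint concavity plus continuity gives concavity in the converse direction). One small simplification you could make: there is no need to take $\theta_2-\theta_1$ small in the forward direction, since $g'(u)\ge g'(v)$ for \emph{all} $u<v$ already says $g'$ is non-increasing, and $g''\le 0$ follows immediately.
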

\begin{proof}
    See Appendix \ref{apx:mlrp}.
\end{proof}

\noindent {\bf Remark.} For example, the family of normal distributions $\{N(\mu, \sigma^2)\}$ is commonly used by varying $\mu$. It satisfies the MLRP, and each normal density function is log-concave. In fact, one can use any log-concave density function to construct a family that satisfies MLRP, according to Proposition \ref{prop_mlrp}.

\begin{thm}[\cite{BagBerg05Logconcave}] \label{thm_begets}
    If $f(x)$ is log-concave on $(a,b)$, then\begin{itemize}
        \item[1.] $F(x) := \int_a^x f(t)\,dt$ is strictly log-concave on $(a,b)$;
        \item[2.] $\overline{F}(x) := \int_x^b f(t)\,dt$ is strictly log-concave on $(a,b)$.
    \end{itemize}
\end{thm}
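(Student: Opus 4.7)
The plan is to prove Part 1 in detail and observe that Part 2 follows by a symmetric argument. Since $F'(x) = f(x) > 0$ and $F''(x) = f'(x)$, Proposition \ref{prop:def} reduces strict log-concavity of $F$ to establishing
\[
f'(x) F(x) - f(x)^2 < 0 \quad \text{on } (a,b).
\]

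First I would derive the identity
\[
f(x)^2 - f'(x) F(x) \;=\; f(x) f(a) \;+\; \int_a^x \bigl[f(x) f'(t) - f'(x) f(t)\bigr]\,dt
\]
by writing $f(x) = f(a) + \int_a^x f'(t)\,dt$ and distributing a factor of $f(x)$. The utility of this identity is that it isolates two sources of positivity: a boundary term $f(x) f(a) \geq 0$, and an integral whose sign is controlled by log-concavity. Using Proposition \ref{prop:def}(2), the integrand rewrites as $f(x) f(t)\bigl[f'(t)/f(t) - f'(x)/f(x)\bigr]$, which is nonnegative because $f'/f$ is nonincreasing and $t < x$. Combined with $f(a) \geq 0$, this immediately gives the weak inequality $f(x)^2 - f'(x) F(x) \geq 0$.

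The main obstacle is upgrading the weak inequality to a strict one starting only from weak log-concavity of $f$. Equality throughout the identity would require both $f(a) = 0$ and the integrand to vanish identically on $(a,x)$, the latter forcing $f'/f$ to be constant there, hence $f$ to be a pure exponential near $a$; but an exponential has strictly positive right-limit at any finite $a$, contradicting $f(a) = 0$. In all remaining cases $f'/f$ strictly decreases somewhere on $(a,x)$, making the integrand strictly positive on a nondegenerate subinterval and hence the integral strictly positive. Either way, the required strict inequality holds. Part 2 follows from the mirror identity
\[
f(x)^2 + f'(x) \overline{F}(x) \;=\; f(x) f(b) \;+\; \int_x^b \bigl[f'(x) f(t) - f(x) f'(t)\bigr]\,dt,
\]
whose integrand is again nonnegative because for $t > x$, Proposition \ref{prop:def}(2) gives $f'(x)/f(x) \geq f'(t)/f(t)$; the strictness argument is analogous, with $f(b)$ playing the role previously played by $f(a)$.
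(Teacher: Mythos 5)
Your proof is correct and follows essentially the same route as the paper's: both arguments reduce the claim to $f'F - f^2 < 0$ via Proposition \ref{prop:def}, exploit the monotonicity of $f'/f$ under the integral sign (your exact identity is just the paper's inequality chain $\frac{f'(x)}{f(x)}\int_a^x f(t)\,dt \leq \int_a^x f'(t)\,dt$ written with the remainder made explicit), and obtain strictness from the boundary term $f(a)f(x)$. The only difference is that you additionally treat the degenerate case $f(a)=0$, which the paper's proof silently excludes by its standing assumption that $f$ maps into $\R_{++}$; that extra care is harmless and would matter only if one wanted the result for densities vanishing at the endpoint.
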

\begin{proof}
    We begin by proving $F(x)$ is log-concave. By Proposition \ref{prop:def}, it suffices to show that $F'' \cdot F - (F')^2 \leq 0$ on $(a,b)$, or \begin{equation}
        f' \cdot F - f^2 \leq 0\label{target_ineq}
    \end{equation}
    on $(a,b)$.
    Since $f$ is log-concave on $(a,b)$, $f'(x)/f(x)$ is decreasing on $(a,b)$. Then, \begin{align*}
        \frac{f'(x)\cdot F(x)}{f(x)} &=  \frac{f'(x)}{f(x)} \int_a^xf(t)\, dt  \\
            &\leq \int_a^x \frac{f'(t)}{f(t)} f(t)\, dt\\
            &= \int_a^x f'(t)\, dt = f(x) - f(a),
    \end{align*} 
    i.e., $\frac{f'(x)\cdot F(x)}{f(x)} \leq  f(x) - f(a)$.
    Rearranging yields \[
    f'\cdot F - f^2 \leq -f(a)\cdot f < 0,
    \]
    implying inequality \eqref{target_ineq} holds. Hence, $F(x)$ is strictly log-concave on $(a,b)$.

    On the other hand, $\overline{F}(x)$ being log-concave requires \[
    -f' \cdot \overline{F} - (f)^2 \leq 0.
    \]
    Using the same logic, \begin{align*}
        \frac{f'(x)\cdot \overline{F}(x)}{f(x)} &=  \frac{f'(x)}{f(x)} \int_x^bf(t)\, dt  \\
            &\geq \int_x^b \frac{f'(t)}{f(t)} f(t)\, dt\\
            &= \int_x^b f'(t)\, dt = f(b) - f(x),
    \end{align*} 
    which implies \[
    -f' \cdot \overline{F} - (f)^2 \leq -f(b) \cdot f(x) < 0. 
    \]
    Therefore, $\overline{F}(x)$ is strictly log-concave on $(a,b)$.
\end{proof}

\begin{cor}
    The \textit{reliability function} $H(x) = \int_x^b \overline{F}(t)\, dt$ is log-concave on $(a,b)$ if $\overline{F}(x)$ is log-concave on $(a,b)$.
\end{cor}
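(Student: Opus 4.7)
The plan is to recognize this corollary as a direct invocation of Theorem \ref{thm_begets} with the log-concave function renamed. The hypothesis gives us that $\overline{F}$ is log-concave on $(a,b)$, and the function $H(x) = \int_x^b \overline{F}(t)\,dt$ has exactly the form $\int_x^b g(t)\,dt$ treated in part 2 of the theorem (with $g = \overline{F}$). So I would simply apply part 2 of Theorem \ref{thm_begets}, substituting $\overline{F}$ for the generic log-concave function $f$ in the theorem, to conclude that $H$ is strictly log-concave on $(a,b)$.

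Before writing the one-line invocation, I would briefly verify that $\overline{F}$ satisfies the standing regularity assumptions imposed on $f$ at the start of Section 1, namely twice continuous differentiability and strict positivity on $(a,b)$. Both inherit easily: since the underlying density is twice continuously differentiable, $\overline{F}(x) = \int_x^b f(t)\,dt$ is in fact three times continuously differentiable, and it is strictly positive on $(a,b)$ because $f > 0$ on the same interval so the tail mass is positive whenever $x < b$. Thus $\overline{F}$ is a legitimate input to Theorem \ref{thm_begets}.

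There is essentially no obstacle here; the entire content of the corollary is the naming. I would note in passing that the theorem actually delivers strict log-concavity of $H$, which is stronger than the corollary claims, so the corollary can be stated with ``log-concave'' replaced by ``strictly log-concave'' at no cost. The proof will therefore be a single sentence citing Theorem \ref{thm_begets} part 2 applied to $\overline{F}$.
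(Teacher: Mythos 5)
Your proposal is correct and matches the paper exactly: the paper's entire justification is that the corollary ``immediately follows from Theorem \ref{thm_begets},'' i.e., part~2 of that theorem applied with $\overline{F}$ in place of $f$, together with the regularity checks you note. One caveat on your closing remark: upgrading the conclusion to \emph{strict} log-concavity is not automatic, because the strict inequality in the theorem's proof relies on the integrand being positive at the right endpoint, whereas $\overline{F}(b)=0$ --- which is presumably why the corollary is stated with plain log-concavity.
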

\begin{cor}
    The \textit{mean residual lifetime function} $\frac{H(x)}{H'(x)} = \frac{\int_x^b \overline{F}(t)\, dt}{\overline{F}(x)}$ is decreasing on $(a,b)$ if $\overline{F}(x)$ is log-concave on $(a,b)$.
\end{cor}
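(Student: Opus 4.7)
The plan is to chain together the previous corollary with the characterization of log-concavity in Proposition \ref{prop:def}. By the preceding corollary, log-concavity of $\overline{F}$ on $(a,b)$ implies log-concavity of $H$ on $(a,b)$. So I would begin by invoking that result and then translate the log-concavity of $H$ into the statement about the mean residual lifetime via a short sign-bookkeeping argument.

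Concretely, the first step is to note that Proposition \ref{prop:def} gives: $H$ log-concave on $(a,b)$ is equivalent to $H'(x)/H(x)$ decreasing on $(a,b)$. By the fundamental theorem of calculus, $H'(x) = -\overline{F}(x)$, so $H'/H = -\overline{F}/H$. Thus $-\overline{F}(x)/H(x)$ is decreasing, which is the same as saying that $\overline{F}(x)/H(x)$ is \emph{increasing} on $(a,b)$.

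The second step is to take reciprocals. On $(a,b)$ the function $\overline{F}(x)/H(x)$ is strictly positive (since $\overline{F}(x) = \int_x^b f(t)\,dt > 0$ and $H(x) = \int_x^b \overline{F}(t)\,dt > 0$ for $x < b$), so its reciprocal is well defined and reverses monotonicity. Therefore
\[
\frac{H(x)}{\overline{F}(x)} \;=\; \frac{\int_x^b \overline{F}(t)\,dt}{\overline{F}(x)}
\]
is decreasing on $(a,b)$, which is the claim (the formal expression $H(x)/H'(x)$ in the statement differs only by a sign from this positive quantity).

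I do not foresee any genuine obstacle here beyond being careful with the sign introduced by $H'(x) = -\overline{F}(x)$; the mathematical content is entirely absorbed by the preceding corollary and by the equivalence in Proposition \ref{prop:def}. If I wanted to avoid invoking the corollary as a black box, an alternative would be to differentiate $H/\overline{F}$ directly and reduce the sign of the derivative to the log-concavity inequality $\overline{F}'' \cdot \overline{F} - (\overline{F}')^2 \le 0$, but the reciprocal route above is shorter and more transparent.
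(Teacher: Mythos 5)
Your argument is correct and matches the paper's intended reasoning exactly: the paper simply notes that the corollary ``immediately follows from Theorem \ref{thm_begets}'' via the first corollary (log-concavity of $H$) and the equivalence in Proposition \ref{prop:def}, which is precisely your chain $H'/H = -\overline{F}/H$ decreasing $\Rightarrow$ $H/\overline{F}$ decreasing. Your side remark about the sign is also well taken, since $H'(x) = -\overline{F}(x)$ means the displayed identity $H/H' = \int_x^b \overline{F}(t)\,dt / \overline{F}(x)$ in the statement is off by a sign.
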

The corollary immediately follows from Theorem \ref{thm_begets}. Note that a stronger sufficient condition for both corollaries to hold is $f(x)$, as opposed to $\overline{F}(x)$, being log-concave on $(a,b)$.

\section{Other Transformations}
In the previous section, I have shown that log-concavity is preserved under integration. I show other transformations that preserves log-concavity in this section.
\subsection{Multiplication}
\begin{thm}
    If $f$ and $g$ are log-concave, then $h(x) = f(x)\cdot g(x)$ is log-concave.
\end{thm}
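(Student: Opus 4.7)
The plan is to exploit the fact that the logarithm converts multiplication into addition, which reduces the claim to the observation that the sum of concave functions is concave. Since the paper defines log-concavity via the pointwise sign of $(\log f)''$, I expect the argument to fit in just a few lines of calculus.

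First I would note that, because $f$ and $g$ are strictly positive and twice continuously differentiable on their common domain, so is the product $h = f\cdot g$, and the function $\log h = \log f + \log g$ is well-defined and twice continuously differentiable. Second, I would differentiate termwise to obtain
\[
(\log h(x))'' = (\log f(x))'' + (\log g(x))'' .
\]
Third, I would invoke Definition \ref{def:lc}: log-concavity of $f$ gives $(\log f)'' \leq 0$, and log-concavity of $g$ gives $(\log g)'' \leq 0$, so their sum is nonpositive. Applying Definition \ref{def:lc} in the reverse direction then concludes that $h$ is log-concave.

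There is essentially no obstacle here; the only thing worth flagging is the implicit domain assumption, namely that $f$ and $g$ are both strictly positive on the interval where we claim log-concavity of $h$, which is precisely the running assumption of the paper that all functions under consideration map into $\mathbb{R}_{++}$. If one wishes, one could also point out the strict version: if either $f$ or $g$ is strictly log-concave, then the inequality above is strict and $h$ is strictly log-concave, by the same additivity argument.
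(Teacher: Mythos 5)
Your proof is correct and takes essentially the same route as the paper: write $\log h = \log f + \log g$ and use that a sum of concave functions is concave. You merely make the paper's one-line argument explicit at the level of second derivatives, which matches the paper's Definition \ref{def:lc} exactly.
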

\begin{proof}
    Since concavity is preserved under sums, $\log h(x) = \log f(x) + \log g(x)$ is concave in $x$, i.e., $h$ is log-concave.
\end{proof}

\noindent {\bf Remark.} Summation does not preserve log-concavity in general. To gain intuition, consider two unimodal density functions with different modes. Their sum may not be unimodal anymore.

\subsection{Composition}

\begin{thm}
    Suppose $f$ is log-concave on $(a,b)$. Let $t$ be a monotonic, twice-differentiable function mapping from $(a',b')$ to $(a,b) = (t(a'), t(b'))$. Define the function $f_t$ with support $(a',b')$ so that for all $x \in (a', b')$, $f_t(x) = f(t(x))$. 
    \begin{itemize}
        \item[(i)] If $f$ is increasing and $t$ is concave, then $f_t$ is log-concave.\footnote{
        \cite{BagBerg05Logconcave} implicitly assume $f$ is increasing in their related results (Theorem 7). 
        }
        
        \item[(ii)] If $f$ is decreasing and $t$ is convex, then $f_t$ is log-concave.
    \end{itemize}
\end{thm}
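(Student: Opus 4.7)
The plan is to work directly with the second derivative of $\log f_t$, which by Proposition \ref{prop:def} is the quantity whose sign we need to control. Writing $\phi(y) := f'(y)/f(y) = (\log f)'(y)$, the chain rule gives
\[
(\log f_t)'(x) = \phi(t(x))\, t'(x),
\]
and differentiating again yields
\[
(\log f_t)''(x) = \phi'(t(x))\, (t'(x))^2 + \phi(t(x))\, t''(x).
\]
So the entire argument reduces to showing that both summands on the right are non-positive under the two sets of hypotheses.

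The first term is easy: log-concavity of $f$ on $(a,b)$ is exactly the statement $\phi'(y) \le 0$ for $y \in (a,b)$, so $\phi'(t(x))\,(t'(x))^2 \le 0$ holds in both cases (i) and (ii), regardless of the direction of monotonicity of $t$. The work is in signing the second term $\phi(t(x))\, t''(x)$. In case (i), $f$ increasing gives $f' > 0$, hence $\phi(t(x)) > 0$; concavity of $t$ gives $t''(x) \le 0$; the product is $\le 0$. In case (ii), $f$ decreasing gives $\phi(t(x)) < 0$, while convexity of $t$ gives $t''(x) \ge 0$, and again the product is $\le 0$. Combining the two non-positive summands gives $(\log f_t)''(x) \le 0$, i.e., $f_t$ is log-concave on $(a',b')$.

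There is really no hard step here; the monotonicity of $t$ is used only implicitly, to guarantee that $t$ maps $(a',b')$ bijectively onto $(a,b)$ so that $f_t$ has the right support and the chain rule applies throughout. The only thing worth being careful about is that the hypotheses on $f$ (increasing vs.\ decreasing) and on $t$ (concave vs.\ convex) must be paired exactly as stated so that the sign of $\phi(t(x))\, t''(x)$ comes out non-positive; the mixed cases (e.g., $f$ increasing with $t$ convex) need not preserve log-concavity, which is consistent with the theorem only claiming (i) and (ii).
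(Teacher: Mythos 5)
Your proposal is correct and follows essentially the same route as the paper: it computes $(\log f_t)'' = \phi'(t)(t')^2 + \phi(t)t''$, which is exactly the paper's expression $\frac{(t')^2[f''f - (f')^2] + f f' t''}{f^2}$ written in terms of $\phi = f'/f$, and signs the two summands in the same way. No gaps.
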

\begin{proof}
    Examining the second-order derivative on $f_t$ with respect to $x$, we have \[
    \frac{(t')^2[f''\cdot f - (f')^2 ] + f\cdot f' \cdot t''}{f^2}.
    \]
    A sufficient condition for the above expression to be nonpositive (so that $f_t$ is log-concave) is that $f\cdot f' \cdot t'' \leq 0$, because the first term in the numerator is nonpositive by log-concavity of $f$ (c.f. Proposition \ref{prop:def}). Therefore, either (i) or (ii) is a sufficient condition for $f_t$ to be log-concave.
\end{proof}

Since linear transformation is both concave and convex, the following corollary is immediate.
\begin{cor}
    Suppose $f$ is log-concave on $(a,b)$. Let $t$ be a linear transformation from $\R$ to $\R$. Define a function $f_t$ with support $(t(a), t(b))$, then $f_t$ is log-concave on $(a,b)$.
\end{cor}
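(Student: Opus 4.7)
The plan is to invoke the second-derivative computation from the proof of the preceding theorem and to exploit the fact that a linear transformation has $t'' \equiv 0$. That computation gave
\[
(\log f_t)''(x) = \frac{(t'(x))^2\bigl[f''(t(x))\,f(t(x)) - (f'(t(x)))^2\bigr] + f(t(x))\,f'(t(x))\,t''(x)}{(f(t(x)))^2}.
\]
Writing $t(x) = \alpha x + \beta$ yields $t''(x) = 0$ identically, so the second summand in the numerator drops out and only $(t')^2[f''\cdot f - (f')^2]/f^2$ remains.

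From there the sign analysis is a one-liner: log-concavity of $f$ (item 3 of Proposition \ref{prop:def}) gives $f''\cdot f - (f')^2 \leq 0$, while $(t')^2 \geq 0$ and $f^2 > 0$, so $(\log f_t)''(x) \leq 0$ and $f_t$ is log-concave on its support.

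I would add a short remark explaining why the corollary is not simply a restatement of case (i) or (ii) of the preceding theorem. Those cases required $f$ to be monotonic in order to control the sign of $f\cdot f'\cdot t''$, but here that term vanishes identically, so no monotonicity assumption on $f$ is needed. The only subtle case worth flagging is the degenerate one with $\alpha = 0$: then $t$ is constant, $f_t$ is a constant function, and log-concavity holds trivially (though $t$ is no longer a bijective reparametrization, so the statement about the support $(t(a),t(b))$ is vacuous). I do not anticipate any real obstacle; the entire content of the corollary is the observation that the problematic term in the second derivative disappears under a linear change of variable.
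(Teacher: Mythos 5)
Your proof is correct and in substance matches what the paper intends: the paper derives the corollary from the preceding composition theorem by noting that a linear map is both concave and convex. But your version is the more defensible one. The theorem's two cases each pair a curvature condition on $t$ with a monotonicity condition on $f$ (increasing for concave $t$, decreasing for convex $t$), so quoting the theorem verbatim only covers monotone $f$; a log-concave $f$ need not be monotone (it can be single-peaked). Your route --- going back to the displayed second-derivative formula and observing that $t'' \equiv 0$ annihilates the only term whose sign depends on $f'$ --- closes that gap and shows that no monotonicity hypothesis is needed, which is precisely the content of the corollary. The sign analysis via item 3 of Proposition \ref{prop:def} is exactly right, and your aside about the degenerate case $\alpha = 0$ is a fair caveat (the interval $(t(a),t(b))$ is then empty, so the claim is vacuous). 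One remaining nit is in the statement itself rather than your argument: the corollary asserts log-concavity of $f_t$ ``on $(a,b)$'' where it should read on $(t(a),t(b))$; your proof correctly establishes log-concavity on the support of $f_t$.
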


It is sometimes useful to set the transformation $t$ to be the ratio $\Gamma(x) : = \frac{F(x)}{f(x)}$. It is easy to verify that for uniform distributions, $\Gamma(x)$ is linear. 
For exponential distributions, $\Gamma(x)$ is convex, so $f\circ \Gamma$ is log-concave if $f$ is log-concave and \textit{decreasing}. The following proposition shows, which is less obvious, that $\Gamma(x)$ is also convex for normal distributions.\footnote{
To avoid algebraic burden, the proposition only argues for standard normal distribution.
} 
\begin{prop}\label{prop_normal_convex}
    For the standard normal distribution with density function $\varphi(x)$ and distribution function $\Phi(x)$. Then $\Gamma(x):= \frac{\Phi(x)}{\varphi(x)}$ is convex. 
\end{prop}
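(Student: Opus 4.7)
The plan is to reduce convexity of $\Gamma$ to a scalar non-negativity statement and then settle that statement by iterated differentiation, exploiting the identity $\varphi'(x) = -x\varphi(x)$ at every step. First I would compute $\Gamma'(x) = 1 + x\,\Gamma(x)$ via the quotient rule, and differentiating once more obtain $\Gamma''(x) = (1+x^2)\,\Gamma(x) + x$. Since $\varphi > 0$, convexity of $\Gamma$ is equivalent to
\[
  h(x) \;:=\; (1+x^2)\Phi(x) + x\,\varphi(x) \;\geq\; 0 \qquad \text{for all } x \in \mathbb{R}.
\]

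For $x \geq 0$ the inequality is immediate, since both summands are non-negative. For $x < 0$, setting $y := -x > 0$ and using $\Phi(-y) = 1-\Phi(y)$ recasts it as the Mills-ratio bound
\[
  m(y) \;:=\; (1+y^2)\bigl(1-\Phi(y)\bigr) - y\,\varphi(y) \;\geq\; 0 \qquad \text{for all } y > 0.
\]
I would prove this bound by showing $m$ is decreasing on $(0,\infty)$ together with $\lim_{y\to\infty} m(y) = 0$. Differentiating (again via $\varphi'(y) = -y\varphi(y)$) the $(1+y^2)\varphi(y)$ and $y^2\varphi(y)$ terms cancel, leaving the clean expression $m'(y) = 2\bigl[y(1-\Phi(y)) - \varphi(y)\bigr]$; the bracketed quantity has derivative $1-\Phi(y) > 0$ and vanishes at $+\infty$ by the standard tail expansion $1-\Phi(y) = \varphi(y)/y + O(\varphi(y)/y^3)$, so it is strictly negative on $(0,\infty)$. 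This gives $m'(y) < 0$, and combined with $m(+\infty) = 0$ (from the same expansion) yields $m(y) \geq 0$.

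The main obstacle is that the sign of $m$ is not accessible at a convenient finite boundary and has to be pinned down using the Mills-ratio asymptotic at $+\infty$; every other step is routine manipulation made tractable by the collapse $\varphi' = -x\varphi$. An alternative that avoids the case split is to differentiate $h$ directly, using $h'(x) = 2[x\Phi(x) + \varphi(x)]$ and $[x\Phi(x)+\varphi(x)]' = \Phi(x) > 0$, and conclude $h$ is increasing with $h(-\infty) = 0$; this still leans on the same left-tail asymptotic $x\Phi(x) \sim -\varphi(x)$ as $x \to -\infty$, so it does not truly eliminate the obstacle, only relocates it.
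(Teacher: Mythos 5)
Your proposal is correct and follows essentially the same route as the paper: compute $\Gamma'' = x + (1+x^2)\Gamma$, reduce to the sign of $(1+y^2)(1-\Phi(y)) - y\varphi(y)$ for $y>0$, and show that function decreases to $0$ at infinity. The only cosmetic difference is that you establish the Mills-ratio bound $y(1-\Phi(y)) < \varphi(y)$ by another derivative-plus-limit argument, whereas the paper gets it from an integration-by-parts estimate in a footnote; both are fine.
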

\begin{proof}
    See Appendix \ref{apx:normal_convex}.
\end{proof}

\subsection{Truncation} 

\begin{thm}[\cite{BagBerg05Logconcave}, Theorem 9]
    If a probability distribution has a log-concave density function (resp. cumulative distribution function), then any truncation of this probability distribution will also have a log-concave density function (resp. cumulative distribution function).
\end{thm}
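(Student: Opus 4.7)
The plan is to separate the two cases. Fix a truncation to some interval $[c,d]$ contained in the original support $D$, and denote the truncated density and CDF by
\[
f_T(x) = \frac{f(x)}{F(d)-F(c)}, \qquad F_T(x) = \frac{F(x)-F(c)}{F(d)-F(c)}, \qquad x \in (c,d).
\]
The density case is essentially a freebie: $\log f_T(x) = \log f(x) - \log(F(d)-F(c))$, and the second term is constant in $x$, so $(\log f_T)'' = (\log f)'' \le 0$ on $(c,d)$. This disposes of the first half of the statement.

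For the CDF case, the denominator is again constant, so it suffices to show that $x \mapsto F(x)-F(c)$ is log-concave on $(c,d)$. By Proposition \ref{prop:def}, this reduces to proving
\[
f'(x)\,\bigl(F(x)-F(c)\bigr) - f(x)^2 \;\le\; 0 \qquad \text{for all } x\in (c,d).
\]
The hypothesis that $F$ is log-concave on $D$ gives the same inequality with $F(c)$ removed, namely $f'(x)F(x) - f(x)^2 \le 0$.

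The key step, and the only place where care is needed, is bridging between these two inequalities. I would split on the sign of $f'(x)$. When $f'(x) \ge 0$, one has $f'(x)(F(x)-F(c)) \le f'(x)F(x)$, so
\[
f'(x)(F(x)-F(c)) - f(x)^2 \;\le\; f'(x) F(x) - f(x)^2 \;\le\; 0,
\]
with the last step by log-concavity of $F$. When $f'(x) \le 0$, the first summand is already nonpositive (since $F(x)-F(c) \ge 0$ for $x \ge c$), so $f'(x)(F(x)-F(c)) - f(x)^2 \le -f(x)^2 < 0$. Either way, the target inequality holds.

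The main obstacle I foresee is simply being careful about signs in the case split — the subtracted quantity $F(c)$ could go either way depending on whether $f'$ pushes the bound up or down, which is why the sign-splitting trick is needed rather than a direct comparison. I do not expect any analytic subtlety beyond that, since the rest of the argument is just rewriting $\log F_T$ and $\log f_T$ and reading off the constant terms.
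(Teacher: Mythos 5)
Your proof is correct, and since the paper omits its own proof (deferring verbatim to Bagnoli--Bergstrom), yours fills a genuine gap. The density half is indeed immediate, and for the CDF half your reduction to $f'(x)\bigl(F(x)-F(c)\bigr)-f(x)^2\le 0$ together with the sign split on $f'(x)$ is exactly the care point: the naive bound $f'(x)\bigl(F(x)-F(c)\bigr)\le f'(x)F(x)$ fails where $f'<0$, and you handle that case correctly by noting the first summand is then nonpositive outright. For what it is worth, there is a slightly slicker route to the same conclusion via item 2 of Proposition \ref{prop:def}: write
\[
\frac{f(x)}{F(x)-F(c)}=\frac{f(x)/F(x)}{1-F(c)/F(x)},
\]
and observe that the numerator is positive and decreasing (log-concavity of $F$) while the denominator is positive and increasing on $(c,d)$ (since $F$ is increasing), so the ratio is decreasing; this avoids the case analysis and does not require differentiating $f$. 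Either way, your argument uses only log-concavity of $F$ for the CDF half and only log-concavity of $f$ for the density half, which is precisely the ``resp.'' structure the theorem asserts.
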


The proof is omitted for it is verbatim in \cite{BagBerg05Logconcave}. Nevertheless, I discuss the example of truncated normal distributions.\footnote{
See Section 6 of \cite{BagBerg05Logconcave} for a nice and thorough discussion on examples of log-concave distributions. 
}

\paragraph{Truncated Normal Distributions.}
Denote the truncated normal distribution by  $\texttt{trunc}_{[a,b]}N(\mu, \sigma^2)$ for shorthand, meaning that the distribution is $N(\mu, \sigma^2)$ conditional on only assumes values in the interval $[a,b]$.
Then, $\texttt{trunc}_{[a,b]}N(\mu, \sigma^2)$ has a conditional cumulative distribution function defined as\footnote{
See, e.g., \cite{Greene2002}, pp.757-759.
} \[
F(x) = \frac{\Phi(\xi) - \Phi(\alpha)}{\Phi(\beta) - \Phi(\alpha)},\; x \in [a,b]
\]
with $\Phi(\cdot)$ being the cdf of the standard normal distribution; $\alpha := \frac{a - \mu}{\sigma}$ is the standardized lower bound; $\beta := \frac{b - \mu}{\sigma}$ is the standardized upper bound; $\xi:= \frac{x-\mu}{\sigma}$ is the standardized statistic.
The probability density function is \[
    f(x) = \frac{\varphi(\xi)}{\sigma (\Phi(\beta) - \Phi(\alpha))},\;  x \in [a,b]
\]
Therefore, $\texttt{trunc}_{[0,1]}N(\mu, \sigma^2)$ has a cdf of \[
F(x;\mu,\sigma) = \frac{\Phi(\frac{x-\mu}{\sigma}) - (1- \Phi(\frac{\mu}{\sigma}))}{\Phi(\frac{1-\mu}{\sigma})+ \Phi(\frac{\mu}{\sigma}) - 1},\; x \in [0,1]
\]
and a pdf of \[
f(x;\mu,\sigma) = \frac{\varphi(\frac{x -\mu}{\sigma})}{\sigma[\Phi(\frac{1-\mu}{\sigma})+ \Phi(\frac{\mu}{\sigma}) - 1]},\; x \in [0,1]
\]
The truncated normal distribution preserves the log-concavity. 

\smallskip

\noindent{\it Uniform Limit as $\sigma \rightarrow \infty$, $\mu = 0.5$.} I claim that $F(x;0.5,\sigma)$ and $f(x;0.5,\sigma)$ converges to uniform distribution function and density function, respectively, as $\sigma$ goes to infinity. By direct calculation,
\[
\lim_{\sigma\rightarrow \infty} F(x;0.5,\sigma) \stackrel{\text{L'h\^opital's Rule}}{=} \frac{- \frac{x-0.5}{\sigma^2}\varphi(\frac{x-0.5}{\sigma})- \frac{0.5}{\sigma^2}\varphi(\frac{0.5}{\sigma})}{- \frac{1}{\sigma^2}\varphi(\frac{0.5}{\sigma})} = x, \forall x \in [0,1].
\]
and
\[
\lim_{\sigma\rightarrow \infty} f(x;0.5,\sigma) \stackrel{\text{L'h\^opital's Rule}}{=} \frac{- \frac{1}{\sigma^2}\varphi(\frac{x-0.5}{\sigma})- \frac{x-0.5}{\sigma^3}\varphi(\frac{x-0.5}{\sigma})}{- \frac{1}{\sigma^2}\varphi(\frac{0.5}{\sigma})} = 1, \forall x \in [0,1].
\]

\section{Application: Log-Concave Demand Curve}
In this section, I provide an application of monopoly pricing to illustrate a role of log-concave distributions play in economic analysis. 

There is a single product market. Consider a continuum of consumers, each of whom has one unit of indivisible demand of the product. Each consumer's value of the unit of demand is $v \in [0,1]$ drawn i.i.d.\ from a distribution $G$ with log-concave density $g$.\footnote{A weaker condition is to assume $1-G(v)$ is strictly log-concave, with which all discussion in this section goes through.} 

There is a monopoly firm serving the product.\footnote{%
One could immediately tell from a mechanism design perspective, that $g$ or $1 - G$ being log-concave implies the virtual value $\psi(v) := v - \frac{1-G(v)}{g(v)}$ is increasing. Then, the revenue maximizing mechanism is a posted price mechanism with $p = \psi^{-1}(c)$ (see, e.g., \cite{borgers2015}).
In this section, I deliberately use the more classical demand curve approach to illustrate the relationship between log-concave value distribution and the property on the demand curve (\cite{BulowRoberts89} first formally point out the link between the two).  
} The firm posts a unit price $p \geq 0$. Then the market demand is $q(p) = \Pr(v \geq p) = 1 - G(p)$. 
The monopolist solves \[
\max_p (p-c)q(p),
\]
where $c \geq 0$ is the firm's marginal cost of production. Assume no fixed cost. 
The following proposition says strict log-concavity of the demand function is a stronger property than the \textit{revenue being strictly concave}.\footnote{
More commonly seen in the IO and regulation literature. See, e.g., Assumption 1 and Proposition 2 of \cite{WeiZouRegulationNoSubsisdy}.
}
Moreover, the optimal price has a decreasing markup as cost goes up.

\begin{prop}
    The revenue function of the monopoly problem is strictly concave (w.r.t.\ quantity $q$) . Moreover, the monopoly price is uniquely characterized as a solution to \[
    p = c + \frac{1 - G(p)}{g(p)}
    \]
    with the markup $\frac{1 - G(p)}{g(p)}$ strictly decreasing as $c$ increases. 
\end{prop}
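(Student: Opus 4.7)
The plan is to reparametrize the problem in quantity space. Writing $q = 1-G(p)$ and inverting to $p(q) = G^{-1}(1-q)$, I have $p'(q) = -1/g(p(q))$, so the revenue $R(q) := p(q)\,q$ satisfies
\[
R'(q) \;=\; p(q) - \frac{q}{g(p(q))} \;=\; p(q) - \frac{1-G(p(q))}{g(p(q))} \;=:\; \psi(p(q)).
\]
Since $p(q)$ is strictly decreasing in $q$, showing that $R$ is strictly concave (equivalently, $R'$ is strictly decreasing in $q$) reduces to showing that the map $\psi(p) := p - (1-G(p))/g(p)$ is strictly increasing in $p$.

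To establish $\psi'(p) > 0$ I will differentiate directly, obtaining
\[
\psi'(p) \;=\; 2 + \frac{g'(p)\,(1-G(p))}{g(p)^{2}}.
\]
The key step is to bound the second term from below by $-1$, and this is where log-concavity enters. By Theorem \ref{thm_begets} applied to the log-concave density $g$, the survival function $\overline{G}(p) := 1-G(p)$ is \emph{strictly} log-concave on the support. Unpacking $(\log \overline{G})''(p) < 0$ and clearing the positive denominator $\overline{G}(p)^{2}$ yields $g(p)^{2} + g'(p)\,(1-G(p)) > 0$, equivalently $\psi'(p) > 1$. This is the main obstacle of the argument, and what makes it work is precisely the strict (not merely weak) inequality supplied by Theorem \ref{thm_begets}; if one only had weak log-concavity of $\overline{G}$, one would obtain $\psi'(p) \geq 0$, which would not suffice for strict concavity of $R$ nor for a strict comparative static on the markup.

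With $\psi$ strictly increasing, the profit $\pi(q) := R(q) - cq$ is strictly concave in $q$, so the monopoly problem admits a unique maximizer, pinned down by the first-order condition $R'(q^{*}) = c$, i.e.\ $\psi(p^{*}) = c$, which rearranges to $p^{*} = c + (1-G(p^{*}))/g(p^{*})$. For the comparative static, I will implicitly differentiate the identity $\psi(p^{*}(c)) = c$ to obtain $dp^{*}/dc = 1/\psi'(p^{*})$, and the bound $\psi'(p^{*}) > 1$ places this ratio strictly in $(0,1)$. Consequently the markup $m(c) := p^{*}(c) - c = (1-G(p^{*}(c)))/g(p^{*}(c))$ satisfies $m'(c) = dp^{*}/dc - 1 < 0$, giving the claimed strict decrease in $c$.
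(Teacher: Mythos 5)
Your argument is correct and follows essentially the same route as the paper's: reparametrize in quantity, identify marginal revenue with $\psi(p)=p-\frac{1-G(p)}{g(p)}$, use log-concavity of the survival function to make $\psi$ increasing, and compose with the strictly decreasing map $p(q)$; your explicit bound $\psi'(p)>1$ in fact gives a slightly cleaner markup comparative static ($dp^*/dc = 1/\psi'(p^*)\in(0,1)$, hence $m'(c)<0$) than the paper's composition of monotone maps. The one inaccuracy is in your aside: \emph{weak} log-concavity of $\overline{G}$ already yields $g^2+g'\,(1-G)\ge 0$ and hence $\psi'\ge 1>0$, which is enough for strict concavity of $R$ (the extra unit contributed by the $p$ term does that work, not the strictness from Theorem \ref{thm_begets}); the strict inequality $\psi'>1$ is genuinely needed only for the \emph{strict} decrease of the markup.
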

\begin{proof}
    First, we prove that the monopolist's revenue is strictly concave in $q$. Strict log-concavity implies $G$ is strictly increasing, which means its inverse is well-defined.
    Rewriting the monopolist's problem by substituting $p(q) = G^{-1}(1-q)$, we get \[
    \max_q [G^{-1}(1-q) - c]q
    \]
    Taking derivative w.r.t.\ $q$, we get \[
    \underbrace{p -\frac{1 - G(p)}{g(p)}}_{\text{marginal revenue}}  \quad - \quad c.
    \]
    Note that the marginal revenue is strictly increasing in $p$ because the term $\frac{1-G(p)}{g(p)}$ is decreasing by log-concavity of $g$. Meanwhile, $p$ is strictly decreasing with $q$, which implies the marginal revenue $p(q) - \frac{1-G(p(q))}{g(p(q))}$ is strictly decreasing in $q$. That is, the revenue function is strictly concave in $q$. 

    Since the revenue function is strictly concave, the first-order condition (marginal revenue equal to marginal cost) is sufficient and necessary for the unique interior solution, i.e., \[
    p(c) = c+ \frac{1 - G(p(c))}{g(p(c))}.
    \]
    One can reinterpret the term $\frac{1 - G(p)}{g(p)}$ as the monopoly markup. Standard implicit function theorem shows $p$ increases as $c$ increases, which implies the markup $\frac{1 - G(p)}{g(p)}$ decreases in $c$.
\end{proof}
 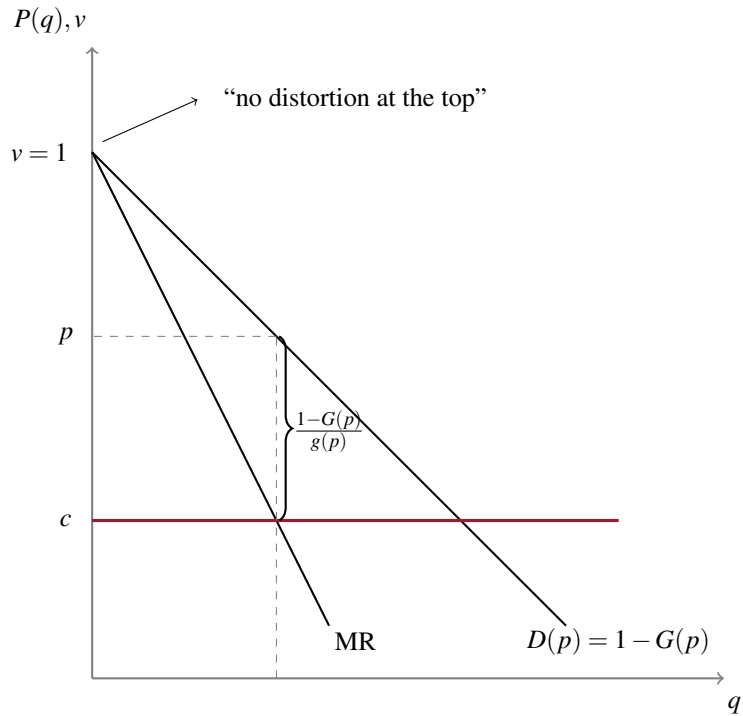
\begin{figure}[htbp]
    \centering 
        \begin{tikzpicture}[scale=7]
        \footnotesize
            \draw [->, thick,gray] (0,0) -- (0,1.2);
            \draw [->, thick,gray] (0,0) -- (1.2,0);
           
            \node at (-0.08, 1.25) {$P(q), v$};

            \node at (1.22, -0.05) {$q$};

            \draw[-, thick] (0,1) -- (0.9,0.1);
            \node at (1, 0.07) {$D(p) = 1 - G(p)$};

            \draw[-, thick] (0,1) -- (0.45, 0.1);
            \node at (0.5, 0.07) {MR};

            \draw[-, pennred, very thick] (0,0.3) -- (1, 0.3);
            \node at (-0.05, 0.3) {$c$};

            \draw[-, black!50, dashed] (0.35, 0) -- (0.35, 0.65);

            \draw[-, black!50, dashed] (0.35, 0.65) -- (0, 0.65);
            \node at (-0.05, 0.65) {$p$};

            \draw[->] (0.02, 1.02) -- (0.2, 1.1);
            \node at (0.5, 1.1) {``no distortion at the top''};

            \draw [decorate, decoration = {brace,mirror, raise = 1pt,amplitude=5pt}, thick] (0.35,0.3) --  (0.35,0.65);

            \node at (0.45, 0.47) {$\frac{1-G(p)}{g(p)}$};

            \node at (-0.1,1) {$v = 1$};

        \end{tikzpicture}

        \caption{Illustration of decreasing markup}
        \label{fig_mr}
    \end{figure}

\noindent{\bf Remark.}
The fact that the price markup decreases in cost $c$ is a distinctive feature of log-concave demand. This property does not hold for general demand systems with concave revenue; for example, constant-elasticity demand (with elasticity larger than $1$)  exhibits increasing markup.

The intuition is most transparent from the perspective of information rents. For a log-concave distribution $G$, the information rent of type $v$, given by\[
\frac{1-G(v)}{g(v)},
\]
is strictly decreasing in 
$v$. Serving any type incurs incentive externalities on all higher types, so the marginal benefit of serving consumers rises more steeply than in the first best. As 
$v$ approaches its upper bound, these incentive effects disappear—hence “no distortion at the top’’ (see Figure \ref{fig_mr}).

Under the first-best benchmark, the planner serves consumers until their value equals cost: $\hat{v} = c$. A monopolist instead sets output by equating marginal revenue and cost:\[
\text{MR}(\hat{v})= \hat{v} - \frac{1 - G(\hat{v})}{g(\hat{v})} = c,
\]
leading to under-provision. Implementing this through a unit price requires\[
p - \frac{1-G(p)}{g(p)} = c,
\]
so the indifferent consumer has value $\hat{v}=p$. Rearranging gives \[
p(c) = c + \frac{1 - G(\hat{v})}{g(\hat{v})}, \hat{v} = p(c),
\]
revealing that the price markup equals the information rent of the marginal (indifferent) consumer. Because the monopolist internalizes the incentive externality of extending trade, its pricing rule embeds exactly this rent—capturing the misalignment between monopolist and planner.

As cost $c$ increases, the indifferent consumer must lie at a higher value. Since information rents decrease in $v$, the markup necessarily falls. Intuitively, the monopolist’s objective becomes more \textit{aligned} with the social planner when serving higher-valued consumers, explaining why markups decrease with cost.

\smallskip

\noindent{\it Demand-side interpretation.} 
The same phenomenon can be understood directly from the \textit{shape} of the demand curve. 
For log-concave $G$, the demand curve $q(p) = 1- G(p)$ 
becomes more elastic as price increases:
\begin{prop}
    The demand curve $q(p) = 1 - G(p)$ is more elastic as $p$ increases. 
\end{prop}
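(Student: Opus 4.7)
The plan is to rewrite the price elasticity of demand as the product of $p$ and the hazard rate $\lambda(p) := g(p)/(1-G(p))$, and then argue both factors are (weakly) positive and increasing in $p$.

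First I would compute the elasticity directly from the definition. Since $q(p) = 1 - G(p)$, we have $q'(p) = -g(p)$, so
\[
\varepsilon(p) \;=\; -\frac{p \cdot q'(p)}{q(p)} \;=\; \frac{p \cdot g(p)}{1 - G(p)} \;=\; p \cdot \lambda(p).
\]
This reduces the question to showing that $p \cdot \lambda(p)$ is increasing in $p$.

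Next I would invoke Theorem \ref{thm_begets}: since $g$ is log-concave, the survival function $\overline{G}(p) = \int_p^1 g(t)\,dt$ is strictly log-concave. By Proposition \ref{prop:def}, this means $(\log \overline{G}(p))' = -g(p)/(1-G(p))$ is strictly decreasing in $p$, which is exactly the statement that the hazard rate $\lambda(p) = g(p)/(1-G(p))$ is strictly increasing. (This is also the standard monotone hazard rate property alluded to in the abstract.)

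Finally, since $p \geq 0$ and $\lambda(p) > 0$ are both nonnegative and (strictly) increasing in $p$, their product $\varepsilon(p) = p \cdot \lambda(p)$ is strictly increasing as well, which is the desired conclusion. There is no real obstacle here beyond recognizing the decomposition $\varepsilon(p) = p \cdot \lambda(p)$; the monotone hazard rate consequence of log-concavity does all the work, and the proof is essentially a one-line application of Theorem \ref{thm_begets}.
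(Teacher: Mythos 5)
Your proposal is correct and follows essentially the same route as the paper: write the elasticity as $p\cdot g(p)/(1-G(p))$ and observe that the hazard rate is increasing because the survival function $1-G$ is log-concave. The only difference is that you explicitly derive the log-concavity of $1-G$ from that of $g$ via Theorem \ref{thm_begets}, which the paper leaves implicit; this is a welcome bit of extra care, not a different argument.
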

\begin{proof}
    The absolute value of the price elasticity is $\eta(p) =  -\frac{d q(p)}{dp} \cdot \frac{p}{q} = \frac{pg(p)}{1-G(p)}$. By log-concavity of $1-G(p)$, $\frac{g(p)}{1-G(p)}$ is positive and increasing. Thus, $\eta(p)$ is positive and increasing for $p > 0$.
\end{proof}
Higher prices therefore place the monopolist on a more price-sensitive segment of demand. The profit markup, expressible as the semi-elasticity term $-qP'(q)$, shrinks as the firm moves into more elastic regions. When cost $c$ rises, the marginal-revenue curve intersects cost at a lower quantity—precisely where the demand curve is more elastic—resulting in a higher price but a smaller markup (c.f.\ Figure \ref{fig_mr}).

\smallskip

\noindent{\it Market power, elasticity, and information rent.}
Price markup is commonly used to measure market power: lower elasticity implies higher markup. Yet elasticity also mirrors the information environment. A less elastic (more inelastic) demand curve corresponds to a larger pool of high-value and relatively homogeneous consumers, which in turn implies a larger information rent for the indifferent consumer. Paradoxically, a \textit{higher} information rent grants market power for the monopolist, even though in mechanism design it represents “difficulty’’ for the designer. The appropriate interpretation of this difficulty is thus the reluctance of supplying more good (and lowering price) when making the marginal tradeoff. Therefore, demand elasticity and information rent are two sides of the same coin. Both summarize the monopolist’s ability to extract surplus and thus provide dual interpretations of market power.

\section{Related Literature}

Log-concave functions and distributions have been studied extensively in probability, convex geometry, and stochastic programming. Foundational work by \cite{Prekopa1973} introduced logarithmically concave measures and established key preservation properties (e.g., under marginals and integrals) that underpin modern treatments of log-concavity and the Prékopa--Leindler inequality. More systematic probability-theoretic accounts of unimodality and convexity, including log-concavity and its links to reliability theory and monotone hazard rates, are provided in \cite{DharmadhikariJoagDev1988} and surveyed in the statistics literature by \cite{SaumardWellner2014}.

Within economics, \cite{An1998} gives a complete characterization of log-concavity (and log-convexity) in terms of comparative statics and stochastic dominance, clarifying its implications for single-crossing and regularity assumptions. The review by \cite{BagBerg05Logconcave} synthesizes many of these ideas for log-concave probabilities and highlights applications in appraisal design in lemon markets. The properties documented here---preservation of log-concavity under integration, truncation, and economically natural transformations---are closely related to conditions used in mechanism design and contract theory; see, for example, \cite{borgers2015} for mechanism design, where monotone hazard rate plays a technical role; or \cite{BoltonConcract} for contract theory, where MLRP helps model the monitoring technology.

\appendix
\section{Omitted Proofs}
\subsection{Proof of Proposition \ref{prop_concave}} \label{apx:concave}
{\it Proof of 1.} Suppose $f$ is a nonnegative concave function. By definition, for any $x, y$ in the domain of $f$, any $\alpha \in [0,1]$, \[
f(\alpha x + (1-\alpha) y) \geq \alpha f(x) + (1-\alpha) f(y).
\]
Taking logarithms on both sides, we get
\[
\log f(\alpha x + (1-\alpha) y) \geq \log (\alpha f(x) + (1-\alpha) f(y)) \geq \alpha \log f(x) + (1-\alpha) \log f(y), 
\]
where the last inequality is implied by concavity of $\log(\cdot)$. Therefore, $f$ is log-concave.

\smallskip

\noindent {\it Proof of 2.} Suppose $f$ is log-concave, we have, for any $x,y$ in the domain of $f$ and any $\alpha \in [0,1]$, \[
\log f(\alpha x + (1-\alpha) y) \geq \alpha \log f(x) + (1-\alpha) \log f(y), 
\]
the right-hand side of which is greater than $\min\{\log f(x), \log f(y)\}$. Therefore $\log f$ is quasi-concave. Note that the upper level sets of $f$, defined as $\{x|f(x) \geq a\}$ for $a > 0$, are equivalent to $\{x|\log f(x) \geq \log a\}$, i.e., the upper level sets of $\log f$. Since $\log f$ is quasi-concave, its upper level sets are convex. Therefore, the upper level sets of $f$ are convex, confirming that $f$ is quasi-concave.  

\subsection{Proof of Proposition \ref{prop_mlrp}}
\label{apx:mlrp}
The proof needs to be done in both directions.

\smallskip

\noindent{\it $f$ is log-concave $\Rightarrow$ MLRP.} Suppose $\log f$ is concave. For any $\theta_1 < \theta_2$ and $x < x'$, the MLRP is equivalent to \[
\log f(x - \theta_2) + \log f(x' - \theta_1) \leq \log f(x - \theta_1) + \log f(x' - \theta_2).
\]
Define $t = \frac{x' - x}{\theta_2 - \theta_1 + x' - x} \in (0,1)$, then \[
x- \theta_1 = t(x-\theta_2) + (1-t)(x' - \theta_1),\; x' - \theta_2 = (1-t)(x - \theta_2) + t(x' - \theta_1).
\]
By concavity of $\log f$, \[
\log f(x - \theta_1) \geq t\log f(x-\theta_2) + (1-t)\log f(x' - \theta_1)
\]
and 
\[
\log f(x' - \theta_2) \geq (1-t)\log f(x - \theta_2) + t\log f(x' - \theta_1).
\]
Adding the above two, I obtain the desired inequality corresponding to MLRP.

\smallskip

\noindent{\it $f$ is log-concave $\Leftarrow$ MLRP.} Suppose the density family has MLRP. For any $a < b$, set $x - \theta_2 = a$, $x' - \theta_1 = b$, and $x - \theta_1 = x' - \theta_2 = \frac{a+b}{2}$ (i.e., take $t = \frac{1}{2}$). The MLRP inequality then simplifies to \[
\log f(a) + \log f(b) \leq 2 \log f\left(\frac{a+b}{2}\right).
\]
This is true for any $a< b$. Therefore, $\log f$ is concave, i.e., $f$ is log-concave.

\subsection{Proof of Proposition \ref{prop_normal_convex}}\label{apx:normal_convex}
Note that \begin{align*}
    \Gamma'(x) = \frac{\varphi^2 - \varphi'\Phi}{\varphi^2}=1 + x \frac{\Phi}{\varphi} = 1+x\Gamma(x)
\end{align*}
where the second equality comes from $\varphi'(x) = -x\varphi(x)$. Therefore, \[
\Gamma''(x) = \Gamma(x) + x\Gamma'(x) = \Gamma(x) + x(1+x\Gamma(x)) = x + (1+x^2)\Gamma(x)
\]
I claim that $\Gamma''(x) \geq 0$ for all $x$, so that $\Gamma(x)$ is convex. Note that the claim is true when $x \geq 0$. What remains is the case of $x < 0$. Let $y = -x > 0$. One wants to show \[
-y + (1+y^2) \frac{\Phi(-y)}{\varphi(-y)} \geq 0,
\]
the left-hand side of which is \[
-y + (1+y^2) \frac{1-\Phi(y)}{\varphi(y)} 
\]
using symmetry of the standard normal distribution. It is then equivalent to proving \[
k(y) := -y\varphi(y) + (1+y^2)(1-\Phi(y)) \geq 0
\]
It is clear that $\lim_{y\rightarrow\infty}k(y) = 0$, following properties of $\Phi$ and $\varphi$. We claim that $k(y)$ weakly decreases in $y \in (0, \infty)$, so that $k(y) \geq 0$ for all $y > 0$. To see this, note that\begin{align*}
    k'(y) &= -\varphi(y) - y\varphi'(y) + 2y(1-\Phi(y)) - (1+y^2)\varphi(y)\\
    &= -2\varphi(y) + 2y(1-\Phi(y)) < 0 
\end{align*}
because\footnote{
This comes from an upper bound for the ``Mills ratio'': \[
1-\Phi(y) = \int_{y}^{\infty}\varphi(t)dt = - \int_{y}^{\infty}t^{-1}d\varphi(t) = \frac{\varphi(y)}{y} - \int_{y}^{\infty}\frac{\varphi(t)}{t^2}dt < \frac{\varphi(y)}{y}.
\] 
} \[
1-\Phi(y) < \frac{\varphi(y)}{y}, \forall y > 0.
\]

\newpage
\bibliographystyle{ecta}
\bibliography{references}

@misc{WeiZouRegulationNoSubsisdy,
  title={Regulating a Monopolist without Subsidy},
  author={Jiaming Wei and Dihan Zou},
  howpublished={Working Paper, \url{https://arxiv.org/pdf/2512.06525}},
  volume={},
  number={},
  pages={},
  year={2025},
  publisher={}
}

@article{BulowRoberts89,
author = {Bulow, Jeremy and Roberts, John},
title = {The Simple Economics of Optimal Auctions},
journal = {Journal of Political Economy},
volume = {97},
number = {5},
pages = {1060-1090},
year = {1989},
doi = {10.1086/261643},

URL = { 
    
        https://doi.org/10.1086/261643
    
    

},
eprint = { 
    
        https://doi.org/10.1086/261643
    
    

}
,
    abstract = { We show that the seller's problem in devising an optimal auction is virtually identical to the monopolist's problem in third-degree price discrimination. More generally, many of the important results and elegant techniques developed in the field of mechanism design can be reinterpreted in the language of standard micro theory. We illustrate this by considering the problem of bilateral exchange with privately known values. }
}

@article{BagBerg05Logconcave,
  author  = {Bagnoli, Mark and Bergstrom, Ted},
  title   = {Log-concave probability and its applications},
  journal = {Economic Theory},
  year    = {2005},
  volume  = {26},
  number  = {2},
  pages   = {445--469},
  doi     = {10.1007/s00199-004-0514-4}
}

@article{An1998,
  author  = {An, Mark Yuying},
  title   = {Logconcavity versus logconvexity: A complete characterization},
  journal = {Journal of Economic Theory},
  year    = {1998},
  volume  = {80},
  number  = {2},
  pages   = {350--369},
  doi     = {10.1006/jeth.1998.2400}
}

@article{Prekopa1973,
  author  = {Pr{\'e}kopa, Andr{\'a}s},
  title   = {On logarithmic concave measures and functions},
  journal = {Acta Scientiarum Mathematicarum},
  year    = {1973},
  volume  = {34},
  pages   = {335--343}
}

@book{DharmadhikariJoagDev1988,
  author    = {Dharmadhikari, Sudhakar W. and Joag-Dev, Kumar},
  title     = {Unimodality, Convexity, and Applications},
  series    = {Probability and Mathematical Statistics},
  publisher = {Academic Press},
  address   = {Boston},
  year      = {1988}
}

@article{SaumardWellner2014,
  author  = {Saumard, Adrien and Wellner, Jon A.},
  title   = {Log-concavity and strong log-concavity: A review},
  journal = {Statistics Surveys},
  year    = {2014},
  volume  = {8},
  pages   = {45--114},
  doi     = {10.1214/14-SS107}
}

@book{Greene2002,
  author    = {Greene, William H.},
  title     = {Econometric Analysis},
  edition   = {5},
  publisher = {Prentice Hall},
  address   = {Upper Saddle River, NJ},
  year      = {2002}
}

@book{borgers2015,
  author    = {B{\"o}rgers, Tilman},
  title     = {An Introduction to the Theory of Mechanism Design},
  publisher = {Oxford University Press},
  address   = {New York},
  year      = {2015}
}

@book{BoltonConcract,
  author    = {Patrick Bolton and Mathias Dewatripont},
  title     = {Contract Theory},
  publisher = {The MIT Press},
  address   = {Cambridge},
  year      = {2005}
}
\end{document}